\documentclass[twoside,a4paper]{article} 
\usepackage{amsmath,graphicx,amssymb,fancyhdr,amsthm}  
\newtheorem{thm}{Theorem}[section]

\newtheorem{prop}[thm]{Proposition}  
\newtheorem{con}[thm]{Conjecture}  
\theoremstyle{definition}  
  
\theoremstyle{remark}  
  
\def\beq{\begin{eqnarray}}  
\def\eeq{\end{eqnarray}}  
\def\bsp{\begin{split}}  
\def\esp{\end{split}}

\def\d{\mathrm{d}}

\newcommand{\mbold}[1]{\mbox{\boldmath{\ensuremath{#1}}}}

\begin{document}   
   
\title{\Large\textbf{Universality and constant scalar curvature invariants}}   
\author{{\large\textbf{A A Coley$^{1}$ and S Hervik$^{2}$} }   
 \vspace{0.3cm} \\   
$^{1}$Department of Mathematics and Statistics,\\   
Dalhousie University,    
Halifax, Nova Scotia,\\    
Canada B3H 3J5    
\vspace{0.2cm}\\ 
$^2$Faculty of Science and Technology,\\   
 University of Stavanger,\\  N-4036 Stavanger, Norway    
\vspace{0.3cm} \\    
\texttt{aac@mathstat.dal.ca, sigbjorn.hervik@uis.no} }   
\date{\today}   
\maketitle   
\pagestyle{fancy}   
\fancyhead{} 
\fancyhead[EC]{A. Coley, S. Hervik}   
\fancyhead[EL,OR]{\thepage}   
\fancyhead[OC]{Universality}   
\fancyfoot{} 
   
\begin{abstract}

A classical solution is called universal  if  
the quantum correction is a multiple of the metric. 
Universal solutions consequently play an important role in the quantum theory.  
We show that in a spacetime which is universal all of the  scalar curvature invariants 
are constant (i.e., the spacetime is $CSI$).    
   
\end{abstract}   

\section{Universality}

In \cite{CGHP}  
metrics of holonomy $\mathrm{Sim}(n-2)$ were investigated, and it  
was found that all 4-dimensional $\mathrm{Sim}(2)$ metrics  
(which belong to the subclass of Kundt-$CSI$ spacetimes \cite{coley}) are  
universal and consequently can be interpreted as metrics with  
vanishing quantum corrections and are automatically solutions to  
the quantum theory.  

A classical solution is called {\it universal}  if  
the quantum correction is a multiple of the metric, and 
therefore plays an important role in the quantum theory regardless  
of what the exact form of this theory might be.  
That is, if the spacetime is universal, then every symmetric conserved rank-2 tensor, $T_{ab}$,  
which is constructed from the metric, Riemann tensor and its covariant  
derivatives, is of the form  
\beq   
T_{ab} = \mu g_{ab},  
\eeq  
where $\mu$ is a constant. 
Now, for every scalar $S$ that  
appears in the action (gravitational Lagrangian) we obtain by variation 
(since these geometric tensors 
are automatically conserved due to 
the invariance of the actions under spacetime diffeomorphisms) a  
symmetric conserved rank-2 tensor 
$S_{ab}$. For each such tensor we have from the condition of universality 
that $S_{ab} = {\hat{\mu}} g_{ab}$. 
By using an appropriate set of such scalars, we shall 
show that all of the  
scalar curvature invariants 
are constant and that the 
resulting spacetimes are therefore CSI (by definition).
Since the resulting spacetime is  
automatically an Einstein space, in effect 
we must show that  
all scalar contractions of the Weyl tensor and its derivatives 
are constants \cite{CSI,CSI4d}. We utilize the results of FKWC 
\cite{FKWC1992,DecaniniFolacci2007} 
to obtain all conserved rank-2 tensors obtained from variations (by Noether's theorem) of an 
elemental scalar Riemann polynomial. 
 
There are a number of related results we would like to investigate in this paper. 
We will state these in terms of a conjecture and will corroborate this conjecture by proving a 
number of sub-results using a number of different arguments.

\begin{con}\label{con}
A Universal n-dimensional Lorentzian spacetime, $(M,g)$, has the following properties:
\begin{enumerate}
\item{} It is CSI.
\item{} It is a degenerate Kundt spacetime.
\item{} There exists a spacetime, $(\tilde{M},\tilde{g})$, of Riemann type {\bf {D}} having identical scalar polynomial invariants; consequently $(\tilde{M},\tilde{g})$ is spacetime homogeneous.
\item{} There exists a homogeneous isotropy-irreducible Riemannian spacetime $(\hat{M},\hat{g})$ having identical scalar polynomial invariants as  $(M,g)$; i.e.,  $(\hat{M},\hat{g})$ is universal as a Riemannian space.
\end{enumerate} 
\end{con}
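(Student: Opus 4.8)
\medskip
\noindent\emph{Proof strategy.} The plan is to prove property~1 in full and then to reduce properties~2--4 to the structure theory of $CSI$ spacetimes together with the known classified families. I would first feed the Einstein--Hilbert scalar $S=R$ into the universality hypothesis: variation of $\int_M\sqrt{-g}\,R$ gives the Einstein tensor $G_{ab}=R_{ab}-\tfrac12 Rg_{ab}$, so universality forces $G_{ab}=\mu g_{ab}$; taking the trace shows (for $n\neq 2$) that $R$ is constant and hence $R_{ab}=\tfrac{R}{n}g_{ab}$. Thus $(M,g)$ is an Einstein space of constant scalar curvature, so $\nabla_cR_{ab}=0$, $\nabla_aR=0$ and, by the contracted Bianchi identity, $\nabla^aC_{abcd}=0$. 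Every scalar polynomial curvature invariant is then a polynomial in $R$ and in the purely Weyl invariants (complete contractions of copies of $C_{abcd}$ and its covariant derivatives), so to establish property~1 it suffices to show these Weyl invariants are constant.

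\medskip
\noindent For each elemental scalar Riemann polynomial $I$ of differential order $N$, variation of $\int_M\sqrt{-g}\,I$ produces, by Noether's theorem and diffeomorphism invariance, a symmetric conserved rank-2 tensor $\mathcal{E}_{ab}[I]$; universality forces $\mathcal{E}_{ab}[I]=\hat\mu\,g_{ab}$, so $g^{ab}\mathcal{E}_{ab}[I]=n\hat\mu$ is constant. Using the results of FKWC \cite{FKWC1992,DecaniniFolacci2007} I would write this trace as
\[
 g^{ab}\mathcal{E}_{ab}[I]=\frac{N-n}{2}\,I+\mathcal{D}[I],
\]
with $\mathcal{D}[I]$ a sum of total divergences of curvature vectors of order at most $N-1$, and then argue by strong induction on $N$. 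The base case $N=2$ is the Einstein step above; for the inductive step, if all elemental invariants of order $\le N-2$ are constant then on the Einstein background $\mathcal{D}[I]$ collapses to a constant --- each term either carries an undifferentiated $\nabla^aR_{bc}$ or $\nabla^aC_{abcd}$ and so vanishes, or is the divergence of a vector built from data already known to be constant or covariantly constant --- whence $\tfrac{N-n}{2}I=n\hat\mu-\mathcal{D}[I]$ is constant and $I$ is constant provided $N\neq n$. Applied to all elemental Weyl polynomials this yields property~1. The main obstacle is the critical case $N=n$, where the trace no longer determines $I$: here I would instead run the argument on $RI$ (differential order $n+2\neq n$) when $R\neq0$; when $R=0$ the space is Ricci flat and one argues separately, either showing it is $VSI$ (hence trivially $CSI$), or using a nonzero constant invariant of positive order as a multiplier, or extracting what is needed from the vanishing of the trace-free part of $\mathcal{E}_{ab}[I]=\hat\mu g_{ab}$. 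Making this bookkeeping uniform in $n$, and verifying that $\mathcal{D}[I]$ truly collapses for \emph{every} elemental polynomial, is where most of the work lies.

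\medskip
\noindent Given property~1, I would obtain property~2 by combining the $CSI$ condition with the structure theory of $CSI$ spacetimes \cite{CSI,CSI4d,coley}: such a spacetime is either locally homogeneous or degenerate Kundt, and it remains only to exclude the locally-homogeneous-but-not-Kundt possibility --- the expectation being that such a spacetime admits a symmetric conserved rank-2 tensor not proportional to $g$, contradicting universality (this is known in four dimensions \cite{CGHP}). For properties~3 and~4 I would use the fact that a $CSI$/degenerate Kundt metric shares its scalar invariants with a locally homogeneous metric, which can be taken of type~\textbf{D} (property~3), and that a Riemannian homogeneous space with the same invariant set is universal exactly when it is isotropy irreducible, since then Schur's lemma forces its only invariant symmetric 2-tensor to be the metric (property~4). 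Because these correspondences are presently established only for the classified cases --- the $VSI$ metrics, the $\mathrm{Sim}(n-2)$-holonomy metrics of \cite{CGHP}, and the four-dimensional classification --- properties~2--4 are stated as a conjecture and corroborated by those cases, while property~1 is proved in general.
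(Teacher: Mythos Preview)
Your treatment of property~1 starts correctly (the Einstein step and the reduction to pure Weyl invariants), but the inductive argument via the trace identity has a genuine gap. The formula $g^{ab}\mathcal{E}_{ab}[I]=\tfrac{N-n}{2}\,I+\mathcal{D}[I]$ is fine, but $\mathcal{D}[I]$ is a total divergence of a curvature vector of order $N-1$, hence itself a scalar of order $N$ --- the \emph{same} order as $I$. Your induction hypothesis only controls invariants of order $\le N-2$, and knowing those are constant does not force an arbitrary order-$(N-1)$ curvature vector to be covariantly constant or divergence-free; a vector such as $C^{pqrs;t}C_{pqra;t}$ is not the gradient of any lower-order scalar invariant. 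So $\mathcal{D}[I]$ does not ``collapse'' as you claim, and the induction does not close. The paper takes a different, more laborious route: at each order it uses the explicit FKWC basis of scalars, computes the \emph{full} conserved tensors $\mathcal{E}_{ab}$ from \cite{DecaniniFolacci2007} (not merely their traces), and shows that the resulting linear system --- e.g.\ (\ref{results}) at order six --- determines \emph{all} independent invariants at that order to be constant. The extra leverage comes precisely from using several action terms simultaneously and the full tensorial equations $\mathcal{E}_{ab}=\lambda_i g_{ab}$, not a single scalar trace.

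For property~2 your plan --- invoke the $CSI$ dichotomy (locally homogeneous or degenerate Kundt) and then exclude the non-Kundt homogeneous branch --- is neither what the paper does nor well supported. A locally homogeneous Einstein space need not admit any conserved symmetric $2$-tensor other than the metric (irreducible symmetric spaces are the obvious examples), so there is nothing with which to contradict universality; the citation of \cite{CGHP} does not establish this exclusion. The paper instead argues \emph{directly} in four dimensions, Petrov type by Petrov type: for types~{\bf D} and~{\bf II} the Einstein and $CSI_0$ conditions together with the Bianchi identities already force $\kappa=\sigma=\rho=0$ and hence Kundt, while for type~{\bf III} one must additionally impose $H^{(6,3)(8)}_{ab}=\lambda g_{ab}$ to get $\rho=0$. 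Properties~3 and~4 then follow in 4D from the explicit short list of homogeneous models produced in \cite{CSI4d}. Finally, bear in mind that the statement is a \emph{conjecture}: the paper proves only these partial 4D results (types {\bf D}, {\bf II}, {\bf III}), so your concession that 2--4 are only corroborated is correct --- but even for property~1 the paper's argument is the explicit order-by-order computation, not your trace induction.
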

 
In low dimensions this conjecture can be proven; in particular,
dimension 2 is trivial as there is only one independent component,
namely the Ricci scalar R.  In dimension 3, there are only Ricci
invariants and the conjecture can be proven by brute force using
symmetric conserved tensors.  Most of our investigation will focus on
dimension 4 and, unless stated otherwise, hereafter we will assume that the
manifold is 4 dimensional.

\section{The CSI result}

Let us first present results substantiating the claim that
universal spacetimes are CSI.  This is clearly the case in the
Riemannian case where Bleecker \cite{Bleecker} showed that the
critical manifolds are homogeneous and, hence, CSI. 
Note that in the Riemannian case a CSI space is equivalent to a
locally homogeneous space; however, in the Lorentzian case these are not equivalent
as there are many examples of CSI spacetimes not being locally homogeneous.

\subsection{The direct method}  
Field theoretic calculations on curved spacetimes are non-trivial due to the 
systematic occurrence, in the expressions involved, of Riemann 
polynomials. These polynomials are formed from the Riemann tensor by 
covariant differentiation, multiplication and contraction. The results of these 
calculations are complicated because of the non-uniqueness of their 
final forms, since the symmetries of the Riemann tensor as well as the
Bianchi identities can not be used in a uniform manner and monomials 
formed from the Riemann tensor may be linearly dependent in 
non-trivial ways. In~\cite{FKWC1992}, Fulling, King, Wybourne 
and Cummings (FKWC)  
systematically expanded the Riemann polynomials encountered in 
calculations on standard bases constructed from group theoretical 
considerations. They displayed such bases for scalar 
Riemann polynomials of order eight or less in the derivatives of the 
metric tensor and for tensorial Riemann polynomials of order six or 
less. We adopt the FKWC-notations ${\cal R}^r_{s,q}$ 
and ${\cal R}^r_{\lbrace{\lambda_1 \dots \rbrace}}$ to denote, 
respectively, the space of Riemann polynomials of rank r (number of 
free indices), order s (number of differentiations of the metric 
tensor) and degree q (number of factors $\nabla^p 
R_{\dots}^{\dots}$) and the space of Riemann polynomials of rank r 
spanned by contractions of products of the type 
$\nabla^{\lambda_1}R_{\dots}^{\dots}$ 
\cite{FKWC1992}. 
The geometrical identities utilized to 
eliminate ``spurious" Riemann monomials include: 
(i) the commutation of covariant derivatives, 
(ii) the ``symmetry" properties of the Ricci and the 
Riemann tensors (pair symmetry, antisymmetry, cyclic symmetry), and 
(iii) the Bianchi identity and its consequences 
obtained by contraction of index pairs.

In this paper we 
actually use a slightly modified version of the FKWC-bases 
\cite{DecaniniFolacci2007}, which are independent of the  
dimension of spacetime 
and  provide irreducible 
expressions for all of our results. In addition, 
the results of \cite{DecaniniFolacci2007}  
provide irreducible expressions for the metric variations (i.e., for 
the functional derivatives with respect to the metric tensor) of the 
action terms associated with the 17 basis elements for the scalar 
Riemann polynomials of order six in derivatives of the metric tensor 
(the so-called curvature invariants of order six).

\subsubsection{Riemann polynomials of rank 0 (scalars)}

The most general expression for a scalar of order six or less in 
derivatives of the metric tensor is obtained by expanding it in the 
FKWC-basis for Riemann polynomials of rank 0 and order 6 or less \cite{FKWC1992}. 
 
{\it The sub-basis for Riemann polynomials of rank 0 and order 2} 
consists of a single element: $R$ [${\cal R}^0_{2,1}$].

Choosing $S$ to be the Ricci scalar, $R$, we find that the Einstein tensor 
is conserved and $R_{ab} = \lambda g_{ab}$,  
where $\lambda$ is a 
constant, and the spacetime is necessarily an Einstein space:  
\beq \label{einstein} 
R_{pq}= \lambda g_{pq};~~ R_{pq ; r}=0.  
\eeq  
Every scalar contraction of the Ricci tensor (or its covariant 
derivatives, which are in fact zero), will thus necessarily be constant. 
Every scalar contraction of the Riemann tensor and its derivatives with the  
Ricci tensor or its covariant 
derivatives will be constant.  
For example, for  $S = R_{ab}R^{ab}$ for an Einstein space we have that $S_{ab} =  
2(R_{acbd} - \frac{1}{4}g_{ab}R_{cd})  R^{cd} = {\tilde{\mu}} g_{ab}$  
(where ${\tilde{\mu}} \sim \lambda^2 $). 
Every mixed invariant (containing both 
the Ricci tensor and the Weyl tensor and their derivatives, will 
be constant or can be written entirely as a contraction of scalars 
involving just the Weyl tensor and its derivatives (up to an additive 
constant term). 
 
Thus to prove that the resulting spacetimes are CSI, we must show that  
all scalar contractions of the Weyl tensor and its derivatives 
are constants.  
 
{\it The sub-basis for Riemann polynomials of rank 0 and order 4} 
has 4 elements: $\Box R$ [${\cal R}^0_{4,1}$]: $R^2$,    
$R_{pq} R^{pq}$,  $R_{pqrs}R^{pqrs}$ [${\cal R}^0_{4,2} $].

From (\ref{einstein}) there is only one rank 0/order 4 independent  
scalar, $C^2 \equiv C_{pqrs} C^{pqrs}$. By varying  $S=C^2$, we obtain a  
symmetric conserved rank-2 tensor which depends on quadratic polynomial 
contractions of the Weyl tensor, which by universality is proportional to 
the metric: 
\beq \label{C1} 
C_{almn}C_{b}^{~lmn}+ C_{blmn}C_{a}^{~lmn} = 2 {\hat{\lambda}} g_{ab}. 
\eeq 
Hence we have that:  
\beq \label{C2} 
C^2  = {\hat{\lambda}}. 
\eeq 
 
Indeed, by choosing $S$ to be a polynomial contraction of the Weyl 
tensor alone (higher than quadratic), we find that by varying  $S$ we obtain   
symmetric conserved rank-2 tensors which depends on polynomial 
contractions of the Weyl tensor which by universality are proportional to 
the metric, and hence all zeroth order invariants constructed from the Weyl tensor 
are constant (and the spacetime is said to be CSI$_0$). 
We note that in higher dimensions, all Lovelock tensors are divergence free 
and consequently (by universality) proportional to the metric. However, we shall 
not proceed in this way here.

The most general expression for a gravitational Lagrangian of order 
six in derivatives of the metric tensor is obtained by expanding it 
in the FKWC-basis for {\it Riemann polynomials of order 6 and rank 
0. This sub-basis} consists of the 17 following elements 
\cite{FKWC1992}: $\Box \Box R$  
$[{\cal R}^0_{6,1}]$: $R\Box R$,   $R_{;p q}R^{pq}$, $R_{pq}\Box R^{pq}$,   
$R_{pq ; rs}R^{prqs}$ 
$[{\cal R}^0_{\lbrace{2,0\rbrace}}]$: 
$R_{;p}R^{;p}$,  
$R_{pq;r} R^{pq;r}$,  $R_{pq;r} R^{pr;q}$,  $R_{pqrs;t}R^{pqrs;t}$ 
$[{\cal R}^0_{\lbrace{1,1 \rbrace}}]$: $R^3$,   $RR_{pq} R^{pq}$,    
$R_{pq}R^{p}_{\phantom{p} r}R^{qr}$, $R_{pq}R_{rs}R^{prqs}$, $RR_{pqrs} R^{pqrs}$,  
$R_{pq}R^p_{\phantom{p} rst} R^{qrst}$,  
$R_{pqrs}R^{pquv}$, $R^{rs}_{\phantom{rs} uv}$,  $R_{prqs} R^{p\phantom{u} q}_{\phantom{p} u \phantom{q} v} R^{r u s v}$ 
$[{\cal R}^0_{6,3}]$.

In general, only 10 of these give rise to independent variations. The other 7 
depend on these via total divergences (and Stokes theorem); 
the functional derivatives (i.e., conserved tensors)  
with respect to the metric tensor of the 
7 remaining action terms can then be obtained in a straightforward manner.

In the case of an 
Einstein space satisfying the conditions 
(\ref{einstein}), (\ref{C1}) and (\ref{C2}), 
there are only three independent rank 0/order 6 scalars: 
\begin{equation}\label{scalars} 
(\nabla{C})^2 \equiv C_{pqrs;t}C^{pqrs;t}, 
C_{1}^3 \equiv C_{pqrs}C^{pquv} C^{rs}_{\phantom{rs} uv},   
C_{2}^3 \equiv C_{prqs} C^{p\phantom{u} q}_{\phantom{p} u \phantom{q} v}  
C^{r u s v} 
\end{equation} 
(where, for example,  ($\nabla{C})^2 \equiv R_{pqrs;t}R^{pqrs;t} =  
C_{pqrs;t}C^{pqrs;t}$).

Variations of the last four scalars in the list above 
give rise to 4 independent conserved rank-2 tensors 
(although $RR_{pqrs} R^{pqrs}$ and 
$R_{pq}R^p_{\phantom{p} rst} R^{qrst}$ are equivalent to   
$\lambda {\hat{\lambda}}$, their variations are non-trivial). 
Note that 
$R_{pqrs;t}R^{pqrs;t}$ 
depends on the other 4 scalars via a total divergence (and Stokes theorem).

\subsubsection{Conserved rank 2 tensors of order six}

The functional derivatives of the ten independent action terms 
in the FKWC-basis were expanded in \cite{DecaniniFolacci2007}: 
for an 
Einstein space satisfying the conditions 
(\ref{einstein}), (\ref{C1}) and (\ref{C2}), we obtain the 
following 4 independent explicit irreducible expressions for the metric variations of the action terms 
 constructed from the 17 scalar Riemann monomials of order six:

\begin{eqnarray}\label{FD_06_9} 
  H_{ab}^{(6,3)(7)} &\equiv  \frac{1}{ 
\sqrt{-g}}\frac{\delta}{\delta g^{ab}} \int_{\cal M} d^D x\sqrt{-g} 
~R_{pqrs}R^{pquv} R^{rs}_{\phantom{rs} uv}  \nonumber \\ 
&    = 24\,  R^{p \phantom{(a}; qr}_{\phantom{p } (a} R_{|pqr|  b)} 
-12\, R^p_{\phantom{p} a;q} R_{p b}^{\phantom{p b};q} + 12\, 
R^p_{\phantom{p} a;q} R^q_{\phantom{q} b;p}    \nonumber \\ 
& \quad + 3\,  
R^{pqrs}_{\phantom{pqrs};a} R_{pqrs ; b }  -6\, 
R^{pqr}_{\phantom{pqr}a;s}R_{pqr b}^{\phantom{pqr b};s} 
-6\, R^{pq}R^{rs}_{\phantom{rs} pa}R_{rs q b}   \nonumber \\ 
& \quad +12\, R^{p r q 
s}R^t_{\phantom{t} pq a}R_{t rs b}  
+ \frac{1}{2}g_{ab} [R_{pqrs}R^{pquv} R^{rs}_{\phantom{rs}uv} ], 
\end{eqnarray} 
which implies that (using (\ref{einstein}), (\ref{C1})-(\ref{C2}))

\begin{eqnarray}\label{FD_06_9a} 
3\, C^{pqrs}_{\phantom{pqrs};a} C_{pqrs ; b }   
-6\,C^{pqr}_{\phantom{pqr}a;s}C_{pqr b}^{\phantom{pqr b};s} +12\, C^{p r q 
s}C^t_{\phantom{t} pq a}C_{t rs b}   \nonumber \\ 
+\frac{1}{2} g_{ab} [C_{pqrs}C^{pquv} C^{rs}_{\phantom{rs} 
uv}] = \lambda_1 g_{ab}. 
\end{eqnarray}

In addition,

\begin{eqnarray}\label{FD_06_7} 
 H_{ab}^{(6,3)(5)} &\equiv \frac{1}{ 
\sqrt{-g}}\frac{\delta}{\delta g^{ab}} \int_{\cal M} d^D x\sqrt{-g} 
~RR_{pqrs} R^{pqrs}  \nonumber \\ 
H_{ab}^{(6,3)(6)} &\equiv \frac{1}{ 
\sqrt{-g}}\frac{\delta}{\delta g^{ab}} \int_{\cal M} d^D x\sqrt{-g} 
~R_{pq}R^p_{\phantom{p} rst} R^{qrst }  \nonumber \\ 
H_{ab}^{(6,3)(8)} &\equiv \frac{1}{ 
\sqrt{-g}}\frac{\delta}{\delta g^{ab}} \int_{\cal M} d^D x\sqrt{-g} 
~R_{prqs} R^{p \phantom{u} 
q}_{\phantom{p} u \phantom{q} v} R^{r u s v}  \nonumber \\ 
\end{eqnarray} 
yield (respectively), 
\begin{eqnarray}\label{eq1} 
2\, C^{pqrs} C_{pqrs ; (a b) }  
+ 2\, C^{pqrs}_{\phantom{pqrs};a} C_{pqrs 
; b } + g_{ab} [-2\, C_{pqrs;t}C^{pqrs;t} \nonumber \\  
+ 2\, C_{pqrs}C^{pquv} C^{rs}_{\phantom{rs} uv}   
+ 8\, C_{prqs} C^{p \phantom{u} q}_{\phantom{p} u \phantom{q} v} C^{r u s 
v} ] = \lambda_2 g_{ab}, 
\end{eqnarray} 
 
\begin{eqnarray}\label{eq2} 
\frac{1}{2} C^{pqrs}C_{pqrs ; (a b) }  
+ \frac{1}{2} C^{pqrs}_{\phantom{pqrs};a} C_{pqrs ; b } - 
C^{pqr}_{\phantom{pqr}a;s}C_{pqr b}^{\phantom{pqr 
b};s}  \nonumber \\ 
+ C^{pq rs}C_{pq t a }C_{rs \phantom{t} b}^{\phantom{rs} t} 
+ 4\, C^{p r q s}C^t_{\phantom{t} pq a}C_{t rs b}  - 
C^{pqr}_{\phantom{pqr} s } C_{pqr t}C^{s 
\phantom{a} t}_{\phantom{s} a \phantom{t} b} \nonumber \\ 
+ \frac{1}{4}g_{ab} [- C_{pqrs;t} C^{pqrs;t} 
+ C_{pqrs}C^{pquv} C^{rs}_{\phantom{rs} uv}  
+4 C_{prqs} C^{p \phantom{u} q}_{\phantom{p} u \phantom{q} v} 
C^{r u s v} ]=  \lambda_3 g_{ab}, 
\end{eqnarray} 
 
\begin{eqnarray}\label{eq3} 
-\frac{3}{4} C^{pqrs} C_{pqrs ; (a b) } 
+ \frac{3}{4} C^{pqrs}_{\phantom{pqrs};a} 
C_{pqrs ; b }   
- \frac{3}{2} C^{pq rs}C_{pq 
t a }C_{rs \phantom{t} b}^{\phantom{rs} t} -9\, C^{p r q s}C^t_{\phantom{t} 
pq a}C_{t rs b}   \nonumber \\ 
 + \frac{3}{2}  C^{pqr}_{\phantom{pqr} s } C_{pqr t}C^{s 
\phantom{a} t}_{\phantom{s} a \phantom{t} b} + \frac{1}{2}g_{ab} [C_{prqs} C^{p \phantom{u} 
q}_{\phantom{p} u \phantom{q} v} C^{r u s v} ] = \lambda_4 g_{ab}. 
\end{eqnarray}

Contracting eqns.  (\ref{FD_06_9a}), (\ref{eq1})-(\ref{eq3}), and using 
$C^{pqrs}\Box {C_{pqrs}} = -C_{1}^3 -4 C_{2}^3 + 2\lambda {\hat{\lambda}}$ 
(etc.) \cite{DecaniniFolacci2007}, 
we then obtain 
\begin{eqnarray}\label{results} 
-3(\nabla{C})^2 + 2C_{1}^3 + 12C_{2}^3 = 4\lambda_1,\nonumber \\ 
-3(\nabla{C})^2 + 3C_{1}^3 + 4C_{2}^3 = 2\lambda_2 - 2\lambda{\hat \lambda} , \nonumber \\ 
-3(\nabla{C})^2 + 3C_{1}^3 + 12C_{2}^3 = 8\lambda_3 - 2\lambda{\hat \lambda},\nonumber \\ 
-3(\nabla{C})^2 + 3C_{1}^3  +16C_{2}^3 = -16\lambda_4  - 6\lambda{\hat \lambda},  
\end{eqnarray} 
and hence the 3 independent scalars of order 6 are all constant: 
 
\begin{equation}\label{results2} 
(\nabla{C})^2 = \mu_1, ~ C_{1}^3 = \mu_2, ~ C_{2}^3 = \mu_3 . 
\end{equation} 
Since all of the basis scalars of order 
six are constant, then all scalars of order 
six are constant. 
 
We now proceed with the 
higher order scalars: orders (8,10,12) were considered in \cite{FKWC1992}. 
In particular, there is a sub-basis of scalar (rank-0) order 8 polynomials 
consisting of 92 elements given in Appendix B of \cite{FKWC1992} from which, by variation, 
we can obtain a set of independent conserved rank-2 tensors of order 8. 
For an Einstein space satisfying 
(\ref{einstein}), (\ref{C1})-(\ref{C2}) and (\ref{results2}), there are only 11 
independent scalars:  
$C^{pqrs;tu}C_{pqrs;tu}$ and 
$C^{pqrs}C_p^{~tuv} C_{qtru;sv}$,  
3 scalars (involving squares of the first covariant derivative) of the form   
$C^{prqs} C^{tuv}_{~~~~p;q} C_{tuvr;s}$, and 6 algebraic 
fourth order polynomials of the form 
$C^{pqrs} C_{pqr}^{~~~~t} C^{uvw}_{~~~~s} C _{uvwt}$. 
By obtaining the set of (more that 12) independent conserved rank-2 tensors of order 8, it 
follows that all of these 11 independent scalars are constant. In particular, 
the 3 scalars involving the first covariant derivative of the Weyl tensor 
are constant, and we are well on our way to showing that the spacetime is $CSI_1$. 
Indeed, in four-dimensions this is sufficient to show that the resulting 
spacetime is $CSI$ \cite{CSI,CSI4d}. 
Continuing in this way we obtain the result that 
in a universal spacetime all scalar curvature invariants
are constant.

An alternative proof, at least in a restricted range of applicability, is 
provided by the slice theorem.

\subsection{The slice theorem}

Let $I_i$ denote all possible polynomial scalar curvature invariants.  Then we can generate
a corresponding set of conserved symmetric tensors, $T_{i,\mu\nu}$, by
considering the variation of $S[I_i]=\int I_i\sqrt{-g}d^Nx$.

Let us assume that the spacetime under consideration is universal. 
If the spacetime is strongly universal then all of these symmetric tensors are zero: 
$T_{i,\mu\nu}=0$. If there is a $T_{i,\mu\nu}$ which is non-zero, 
then the spacetime is weakly universal, and, assuming that 
$T_{1,\mu\nu}=\lambda_1g_{\mu\nu}\neq 0$, we can then define the equivalent set of invariants: 
\[ \tilde{I}_1=I_1+2\lambda_1, \quad \tilde{I}_i=I_i-\frac{\lambda_i}{\lambda_1}I_1.\] 
We notice that for this new set of invariants, the corresponding conserved tensors are all zero: $\tilde{T}_{i,\mu\nu}=0$. 

This means that we have a full set of invariants each of which has a
zero variation:  $\frac{\delta S}{\delta g_{\mu\nu}}=0$.  This is a
signal that a universal metric has a degeneracy in its curvature
structure.  In particular, consider a metric variation $\delta
g_{\mu\nu}=\epsilon h_{\mu\nu}$, where $h_{\mu\nu}g^{\mu\nu}=0$
(traceless).  This implies that the variation with respect to
the metric is zero, which then  implies that the variation of all the invariants
in the direction of $h_{\mu\nu}$ is zero.  The metric is thus a
fixed point of all possible actions.

For the degenerate Kundt metrics there exists a one-parameter family
of metrics $g_\tau$ such that $I_i[g]=I_i[g_\tau]$.  Clearly, this
implies that $\lim_{\tau\rightarrow 0}\frac{S[g]-S[g_\tau]}{\tau}=0$;
i.e., $\frac{\delta S}{\delta g_{\mu\nu}}$ valishes along
$h_{\mu\nu}\equiv\lim_{\tau\rightarrow
0}(g_{\mu\nu}-g_{\tau,\mu\nu})/\tau$.  We note that for the Kundt
spacetimes this metric deformation can always be chosen to be
traceless (indeed, nilpotent).  The universality condition leads to
additional conditions since \emph{all} variations of the metric is
required to be zero.  However, degenerate Kundt metrics are
particularly promising candidates for universal metrics \cite{inv}.

In the Riemannian case, the slice theorem was used by Bleecker
\cite{Bleecker} to prove many results regarding critical metrics.  The
slice theorem considers the manifold of metrics modulo the
diffeomorphism group.  Ebin \cite{Ebin} proved the slice theorem for
the compact Riemannian case.  The Lorentzian case is  more
problematic and its general validity is questionable, but
Isenberg and Marsden \cite{IM} showed a slice theorem for solutions
to the Einstein equations given some assumptions (essentially, global
hyperbolicity and compact spatial sections).  In its infinitesimal
version, it states that any symmetric tensor can be split as follows:
\beq
S_{\mu\nu}=\pounds_{X}g_{\mu\nu}+T_{\mu\nu},
\eeq
for some vector field, $X$, and where $T$ is conserved: $\nabla^\mu T_{\nu\mu}=0$. 
The vector field $X$ can be interpreted as the generator of the diffeomorphism group 
and thus corresponds to a ``gauge freedom''. 

\emph{Consider the Lorentzian case when the slice theorem is valid}.  It can
now be shown that universality implies CSI (following parts of Bleecker's
argument).  Assume therefore that the spacetime is \emph{not CSI}.
Then there must exists a non-constant invariant $I$.  In particular,
there must exist a non-trivial interval $[a,b]$ onto which the
invariant $I$ is onto.  Therefore, choose a sufficiently small interval and a
smooth function $f(I)$.  The space of such functions is clearly
infinite dimensional.  Construct then the tensor deformation
$\tilde{g}_{\mu\nu}=(1+f(I))g_{\mu\nu}$.  By the slice theorem, there
exists a diffeomorphism $\phi$ such that $\phi^*\tilde{g}_{\mu\nu}$
is conserved.  Clearly, $\phi^*\tilde{g}_{\mu\nu}$ is an invariant
tensor and thus, by universality, $\phi^*\tilde{g}_{\mu\nu}=\lambda
g_{\mu\nu}$.  This implies that the metric deformation is a conformal
transformation.  However, the space of conformal transformations is
finite, thus, it must be possible to choose a $f(I)$ such that
$\phi^*\tilde{g}_{\mu\nu}\neq \lambda g_{\mu\nu}$.  Consequently, the
space is not universal.  To summarise, if a spacetime is not CSI, then
it is not universal.
Therefore, \emph{universality implies CSI}.

Note that in the compact Riemannian case the slice theorem holds and thus
universality implies CSI.  In the Riemannian case CSI implies local
homogeneity and thus this provides us with a slightly different proof to
that of Bleecker \cite{Bleecker}.

Of course, this result depends crucially on the range of applicability 
of the slice theorem. It is consequently  of interest to determine
for which Lorentzian spaces the slice theorem is valid.  
However, the result is important in the context here, since it can be seen that
there is a clear link between universality and CSI spaces, which lends further 
support to the  conjecture.

Let us next consider some properties of the Kundt-CSI spacetimes.

\section{Kundt CSI metrics}\label{sect:Kundt}

In \cite{CSI4d} it was proven that if a 4D spacetime is $CSI$, then either  
the spacetime is locally homogeneous  
or the spacetime is a degenerate Kundt spacetime.  
The Kundt-$CSI$ spacetimes are of particular interest  
since they are solutions of supergravity or superstring theory when supported by  
appropriate bosonic fields \cite{CFH}.  
It is plausible that a wide  
class of $CSI$ solutions are exact solutions to string theory  
non-perturbatively \cite{string}.  
In the context of string theory, it is of considerable interest to  
study   
higher dimensional Lorentzian $CSI$ spacetimes. In particular,   
a number of higher-dimensional $CSI$ spacetimes are also known to be  
solutions of supergravity theory \cite{CFH}.  
The supersymmetric properties of $CSI$ spacetimes have also  
been studied, particularly those that admit a null 
covariantly constant  
vector (CCNV) \cite{McNutt}.

A Kundt-$CSI$ can be written in the form   \cite{coley} 
\beq \d s^2=2\d u\left[\d v +H(v,u,x^k)\d   
u+W_{i}(v,u,x^k)\d x^i\right]+g^{\perp}_{ij}(x^k)\d x^i\d x^j,   
\label{HKundt}   
\eeq    
where the metric functions $H$ and $W_ {i}$, requiring $CSI_0$,  are given by   
\beq   
W_{i}(v,u,x^k)&=& v{W}_{i}^{(1)}(u,x^k)+{W}_{i}^{(0)}(u,x^k),\label{HKa}\\   
H(v,u,x^k)&=& {v^2}\tilde{\sigma}+v{H}^{(1)}(u,x^k)+{H}^{(0)}(u,x^k), \label{HKb} \\   
\tilde{\sigma} &\equiv& \frac 18\left(4\sigma+W^{(1)i}W^{(1)}_i\right),  
\label{sigma}    
\eeq   
where $\sigma$ is a constant.  
The remaining equations for $CSI_0$ that need to be solved are (hatted indices refer to an orthonormal frame in the transverse space):    
\beq   
 \label{Wcsi1}   
W^{(1)}_{[\hat i;\hat j]} &=& {\sf a}_{\hat i\hat j}, \\   
W^{(1)}_{(\hat i;\hat j)}-\frac 12 \left(W^{(1)}_{\hat i}\right)\left(W^{(1)}_{\hat j}\right) &=& {\sf s}_{\hat i\hat j},   
\label{Wcsi4}\eeq 
where the  ${\sf a}_{\hat i\hat j}$ and ${\sf s}_{\hat i\hat j}$  and
the components ${R}^{\perp}_{\hat i\hat j\hat m\hat n}$ are all  
constants (i.e., $\d S^2_H=g^{\perp}_{ij}(x^k)\d x^i\d x^j$ 
is curvature homogeneous). 
In four dimensions, $g^{\perp}_{ij}(x^k)\d x^i\d x^j$ is 2 dimensional, which immediately implies  $g^{\perp}_{ij}(x^k)\d x^i\d x^j$ is a 2 dimensional locally homogeneous space and, in fact, maximally symmetric space.  Up to scaling, there are (locally) only 3 such, namely the sphere, $S^2$; the flat plane, $\mathbb{E}^2$; and the hyperbolic plane, $\mathbb{H}^2$.   
   
The equations (\ref{Wcsi1}) and (\ref{Wcsi4}) now give a  
set of differential equations for $W^{(1)}_{\hat i}$.  
These equations determine uniquely $W^{(1)}_{\hat i}$ up to initial conditions  
(which may be free functions in $u$).  Also, requiring  
$CSI_1$ gives an additional set of constraints:    
\beq   
{\mbold\alpha}_{\hat i;}&=&\sigma W^{(1)}_{\hat i}-\frac 12({\sf s}_{\hat j\hat i}+{\sf a}_{\hat j\hat i})W^{(1)\hat j}, \\   
{\mbold\beta}_{\hat i\hat j\hat k}&=&W^{(1)\hat n}{R}^{\perp}_{\hat n\hat i\hat j\hat k}-W^{(1)}_{\hat i}{\sf a}_{\hat j\hat k}+({\sf s}_{\hat i[\hat j}+{\sf a}_{\hat i[\hat j})W^{(1)}_{\hat k]},   
\eeq   
where ${\mbold\alpha}_i$ and ${\mbold\beta}_{\hat i\hat j\hat k}$ are contants determined from the curvature invariants.    
We note that 
for a four-dimensional Kundt spacetime, $CSI_1$ implies $CSI$  \cite{CSI4d}.

There is a strong relationship between CSI spacetimes and those that are universal in four dimensions: 
\begin{thm}
A 4D universal spacetime of Petrov type {\bf {D}}, {\bf {II}}, or {\bf {III}}, is a Kundt-CSI spacetime.
\end{thm}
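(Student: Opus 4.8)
The plan is to reduce the statement to the structure theorem for $4$D $CSI$ spacetimes and then to rule out the locally homogeneous alternative for the three Petrov types in question.

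First, by the results of the previous section a $4$D universal spacetime is $CSI$; choosing $S=R$ in particular shows that it is an Einstein space, $R_{ab}=\lambda g_{ab}$ (equation (\ref{einstein})). By \cite{CSI4d} a $4$D $CSI$ spacetime is either locally homogeneous or a degenerate Kundt spacetime. In the second case the spacetime is by construction Kundt and $CSI$, i.e.\ Kundt-$CSI$, and there is nothing more to prove. It therefore suffices to treat the locally homogeneous case and to show that a $4$D locally homogeneous Einstein spacetime of Petrov type {\bf D}, {\bf II} or {\bf III} carries a Kundt congruence.

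Second, for each of these Petrov types the Weyl tensor distinguishes a preferred null congruence $\ell$: a multiple principal null direction of multiplicity two for type {\bf II}, of multiplicity three for type {\bf III}, and a pair of such directions for type {\bf D}. Because the spacetime is Einstein, the generalized Goldberg--Sachs theorem forces $\ell$ to be geodesic and shear-free, so that only the expansion and the twist of $\ell$ remain to be dealt with. The idea is to exploit local homogeneity together with the $CSI$ property: in a null frame adapted to $\ell$ the components of the Weyl tensor may be taken constant, so the Newman--Penrose / GHP transport equations controlling the optical scalars of $\ell$ — the Sachs equations and their transverse counterparts, together with the Bianchi identities — acquire constant coefficients, and this over-determines the optical scalars and forces the expansion and twist to vanish. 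Alternatively, one may appeal to the classification of $4$D locally homogeneous Lorentzian Einstein geometries and verify directly that those of Petrov type {\bf D}, {\bf II} or {\bf III} all admit a Kundt structure; for type {\bf D} the relevant geometries are the direct products $dS_2\times S^2$ and $AdS_2\times H^2$, which are well-known members of the Kundt-$CSI$ family \cite{coley}. Once $\ell$ is geodesic, shear-free, non-expanding and non-twisting the spacetime is Kundt, and, being simultaneously Kundt and $CSI$, it is Kundt-$CSI$ (and, by \cite{CSI4d}, automatically a degenerate Kundt spacetime).

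The main obstacle is exactly this second step — excluding a non-Kundt locally homogeneous possibility. It is here that the restriction to Petrov types {\bf D}, {\bf II} and {\bf III} is indispensable: for type {\bf O} the Einstein condition forces constant curvature, and de Sitter and anti--de Sitter space are universal yet are not Kundt spacetimes, so no analogous statement can hold without some hypothesis on the Petrov type; the type {\bf N} case has to be handled separately. The technical heart of the argument is thus the claim that local homogeneity plus constancy of all curvature invariants annihilates the expansion and twist of the repeated principal null direction, and making this precise will require the explicit optical transport equations together with the algebraically special form of the Weyl tensor (and of its covariant derivatives, which one also expects to be aligned with $\ell$).
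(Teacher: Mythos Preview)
Your route is genuinely different from the paper's. The paper does not pass through the locally-homogeneous/degenerate-Kundt dichotomy of \cite{CSI4d} at all; it works directly in the Newman--Penrose formalism. For types {\bf D} and {\bf II} the $CSI_0$ property makes the boost-weight-zero Weyl components constant in the canonical frame, and the Bianchi identities then force the Kundt conditions outright; $CSI_1$ (also established in the preceding section) then gives Kundt-$CSI$. For type {\bf III} the Bianchi identities alone yield only $\kappa=\sigma=0$ together with relations among the remaining spin coefficients; the paper then invokes universality \emph{itself}, not merely its $CSI$ consequence, by imposing $H^{(6,3)(8)}_{ab}=\lambda g_{ab}$ for the specific order-six conserved tensor of (\ref{eq3}), which produces $\rho^{2}=0$ and hence $\rho=0$.

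This comparison exposes the gap in your second step. You correctly isolate the crux --- showing that the repeated principal null direction has vanishing expansion and twist --- but you do not carry it out; the assertion that the transport equations ``over-determine the optical scalars'' is left as a programme, as you yourself acknowledge. More seriously, for type {\bf III} every polynomial Weyl invariant vanishes identically, so the $CSI$ hypothesis adds nothing to the Weyl sector beyond what the Petrov type already gives; there is no reason to expect the Bianchi and Sachs equations with that (trivially constant) curvature input to kill $\rho$ on their own. This is exactly why the paper reaches back to the full universality hypothesis and a specific conserved tensor at this point --- an ingredient your reduction to ``$CSI$ $+$ locally homogeneous'' has discarded. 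Your fallback of citing the classification of locally homogeneous Einstein $4$-geometries is only indicated for type {\bf D}; for types {\bf II} and {\bf III} you would still have to produce or rule out such geometries, which is itself nontrivial. (Incidentally, de Sitter and anti--de Sitter do admit Kundt forms --- cf.\ the list in the proof of Proposition~3.2 --- so your aside about type {\bf O} is not quite accurate, though this is peripheral.)
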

\begin{proof}

Consider type {\bf {D}} first.  Assuming that the spacetime is Einstein, 
then that spacetime is necessarily CSI$_0$ (which follows from
the previous discussion).  This implies that the boost weight  0 components are
constants in the canonical frame.  Using the Bianchi identities,
it immediately follows that it is also Kundt.  Since the previous
analysis also implies that it is CSI$_1$, then we have that the
spacetime is Kundt-CSI.

For type {\bf {II}}, the analysis is almost identical to the type {\bf {D}} analysis. For type {\bf {III}}, 
it is necessary to calculate some conserved tensors. Using the Weyl type {\bf {III}} canonical form, 
the Bianchi identities imply that the
Newman-Penrose spin coefficients \cite{NP} 
$\kappa=\sigma=0$ (and $\rho=\epsilon$, $\beta=\tau$, $\alpha=-2\pi$, $\gamma=-2\mu$). Requiring also that $H^{(6,3)(8)}_{ab}=\lambda g_{ab}$, say, gives the additional equation: 
$\rho^2=0$. Clearly, $\rho=0$ and the spacetime is Kundt. Since CSI$_1$ implies CSI for Kundt spacetimes, the theorem follows.  
\end{proof}

Although the theorem does not explicitly include Weyl type {\bf {N}} and {\bf {I}} spacetimes, it
is believed that these are also Kundt.  For type {\bf {I}} the expressions
for the conserved tensors are messy and unmanageable, and for type
{\bf {N}} it is necessary to compute a particular order 16 conserved tensor.

This proves the first two statements in the Conjecture \ref{con},
at least for Petrov types {\bf {D}}, {\bf {II}} and {\bf {III}} in 4 dimensions. However, we can 
see that the last two statements are also true: 
\begin{prop}
Consider a 4D Kundt CSI spacetime $(M,g)$. If $(M,g)$ is universal, then Conjecture \ref{con} is true. 
\end{prop}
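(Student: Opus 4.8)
The plan is to verify the four assertions of Conjecture~\ref{con} for $(M,g)$ one at a time, using the structure theory of $CSI$ spacetimes \cite{coley,CSI,CSI4d}, the fact established in Section~2 that universality forces $(M,g)$ to be Einstein and $CSI$, and Bleecker's classification of critical Riemannian metrics \cite{Bleecker}. Assertion (1) is the standing hypothesis. Assertion (2) follows from the 4D $CSI$ classification of \cite{CSI4d} together with the Kundt hypothesis: the universality constraints derived in Section~2 make the boost-weight-zero curvature components constant in the Kundt frame, and the Bianchi identities --- exactly as in the proof of the preceding Theorem --- then force the degenerate Kundt alignment of the Riemann tensor and its covariant derivatives with the Kundt null direction.

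For assertion (3) the essential input is that the scalar polynomial curvature invariants of a degenerate Kundt metric are functions of the boost-weight-zero part of the Riemann tensor and its covariant derivatives alone, i.e., of the constants $\sigma$, ${\sf a}_{\hat i\hat j}$, ${\sf s}_{\hat i\hat j}$, $R^{\perp}_{\hat i\hat j\hat m\hat n}$ of (\ref{Wcsi1})--(\ref{Wcsi4}) and their $CSI_1$ descendants. By the structure theory of 4D $CSI$ spacetimes \cite{CSI4d} this boost-weight-zero ``core'' is realized by a \emph{locally homogeneous} spacetime $(\tilde M,\tilde g)$ whose only nonvanishing curvature has boost weight zero --- hence which is of Riemann type {\bf D} --- and, by the invariant-matching property, $(\tilde M,\tilde g)$ then has the same scalar polynomial invariants as $(M,g)$; this is assertion (3). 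Since universality makes $(M,g)$ Einstein, matching the Ricci invariants forces $(\tilde M,\tilde g)$ to be Einstein as well, so $(\tilde M,\tilde g)$ is (locally) one of the homogeneous Einstein type-{\bf D} $4$-spaces: a constant-curvature space, or an Einstein product $dS_2\times S^2$ or $AdS_2\times\mathbb{H}^2$ of two $2$D space forms of equal curvature.

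For assertion (4) I would analytically continue (``Wick rotate'') the type-{\bf D} background $(\tilde M,\tilde g)$ to a Riemannian space $(\hat M,\hat g)$, which by the list above is one of $\mathbb{R}^4$, $S^4$, $\mathbb{H}^4$, $S^2\times S^2$ or $\mathbb{H}^2\times\mathbb{H}^2$ (with the two factors again of equal curvature). For these product and constant-curvature geometries every fully contracted curvature monomial is the same function of the sectional curvatures irrespective of the signature, so the continuation preserves all scalar polynomial invariants and $(\hat M,\hat g)$ has the same invariants as $(\tilde M,\tilde g)$, hence as $(M,g)$. Moreover each of these five Riemannian manifolds admits a transitive isometry group with irreducible isotropy representation --- for the product cases one uses the group generated by the factor isometries together with the interchange of the two (equal) factors --- so $(\hat M,\hat g)$ is homogeneous and isotropy-irreducible, and therefore, by Bleecker \cite{Bleecker}, universal as a Riemannian space. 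This gives assertion (4) and completes the argument.

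The step I expect to be the main obstacle is the end of assertion (3) together with assertion (4). One must (i) establish the all-orders statement that the scalar polynomial invariants of a degenerate Kundt metric depend only on the boost-weight-zero data --- the computations of Section~2 cover only low orders, so the full $CSI$/$VSI$ machinery of \cite{CSI,CSI4d} is needed; and (ii) show, using that universality forces the Einstein condition, that the admissible boost-weight-zero ``cores'' are precisely those realized by the short list of homogeneous Einstein type-{\bf D} $4$-spaces, so that their Riemannian continuations are genuinely isotropy-irreducible --- in particular, one should not overlook the discrete factor-interchange symmetry needed to make the isotropy representation of the product backgrounds $S^2\times S^2$ and $\mathbb{H}^2\times\mathbb{H}^2$ irreducible.
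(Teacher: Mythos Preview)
Your proposal is correct and follows essentially the same route as the paper: use the Einstein condition forced by universality together with the 4D Kundt--$CSI$ structure theory of \cite{CSI4d} to reduce the type-{\bf D} companion $(\tilde M,\tilde g)$ to the short list Minkowski, $dS_4$, $AdS_4$, $dS_2\times S^2$, $AdS_2\times\mathbb{H}^2$, and then pass to the Riemannian counterparts $\mathbb{R}^4$, $S^4$, $\mathbb{H}^4$, $S^2\times S^2$, $\mathbb{H}^2\times\mathbb{H}^2$ with identical invariants. Your write-up is considerably more explicit than the paper's terse argument --- in particular the isotropy-irreducibility via factor-interchange for the product backgrounds and the appeal to Bleecker for Riemannian universality --- but the underlying strategy is the same.
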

\begin{proof}
The proof utilises the results from \cite{CSI4d}.  Assuming  an Einstein,
Kundt-CSI spacetime, the cases  reduces to those  where the corresponding
homogeneous spacetime $(\tilde{M},\tilde{g})$, is locally one of the
following:  Minkowski, de Sitter, anti-de Sitter, $dS_2\times S^2$,
or $AdS_2\times H^2$.  These have the corresponding Riemannian
counterparts, $(\hat{M},\hat{g})$, with identical invariants:  flat
space, $S^4$, $H^4$, $S^2\times S^2$, $H^2\times H^2$.
\end{proof}

We note that the opposite does not follow; namely, it is not true that for every
Riemannian universal spacetime there is a Lorentzian spacetime with the same
invariants.  For example, the symmetric spaces $CP^2$ and $H_{\mathbb{C}}^2$
(with the corresponding Fubini-Study and Bargmann metrics, respectively) do not
have Lorentzian counterparts.  Thus the Conjecture \ref{con} is signature
dependent.

Finally, let us comment on the situation in higher dimensions,
where much less is known. We have not yet proven that higher dimensional CSI spacetimes
are either locally homogeneous or degenerate  Kundt, which is necessary for a proof of
a higher dimensional version of Proposition 3.2, although we have conjectured that this is so \cite{CSI,CSI4d}.
It is also very likely (from the study of curvature operators) that  universality implies that a spacetime is degenerate Kundt,
but this has not been discussed explicitly to date. However, 
proving the higher dimensional version of the last part of Conjecture 1.1 will likely be more difficult since 
we first need the analytic extension
of the Kundt spacetime.

\newpage

{\em Acknowledgements}. We would like to thank Gary W. Gibbons for helpful comments on the current manuscript.  
This work was supported, in part, by NSERC of Canada.


\begin{thebibliography}{99}  

  
\bibitem{CGHP} A. A Coley, S. Hervik, G. W.  Gibbons and C. N.  Pope, 2008, 
Class. Quant. Grav. \textbf{25}, 145017 [arXiv:0803.2438];  
G. W.  Gibbons and C. N.  Pope, 2008, Class. Quant. Grav. \textbf{25}, 125015 
[arXiv:0709.2440].
  
\bibitem{coley}    A. Coley,  2008,  
Class. Quant. Grav. {\bf 25}, 033001  [arXiv:0710.1598].

\bibitem{CSI}   A. Coley, S. Hervik and N. 
Pelavas, 2008, Class. Quant. Grav. {\bf 25}, 025008  
[arXiv:0710.3903];  A. Coley, S. Hervik and N.   
Pelavas, 2006, Class. Quant. Grav. {\bf 23}, 3053.   
  
   
\bibitem{CSI4d} A. Coley, S. Hervik and N.   
Pelavas, 2009, Class. Quant. Grav. {\bf 26}, 125011.
  

\bibitem{FKWC1992}
S.~A. Fulling, R.~C. King,
B.~G. Wybourne, and C.~J. Cummings,  {1992}, {Class. Quant. Grav.} 
\textbf{9}, {1151}.

 
\bibitem{DecaniniFolacci2007} Y.~D\'ecanini and A.~Folacci,  {2007},
{Class. Quant. Grav.} \textbf{{24}}, 
{4777} [also see  arXiv0805.1595].


 
\bibitem{Bleecker}
D.D. Bleecker, 1979, J. Diff. Geo. {\bf 14}, 599.

 
\bibitem{inv} 
A. Coley, S. Hervik and N.   
Pelavas, 2009, Class. Quant. Grav. {\bf 26}, 025013.  
  
 
\bibitem{Ebin}
D.G. Ebin, 1968, Proc. Sympos. Pure Math. {\bf 15}, 11.
    
   
\bibitem{IM}
J. Isenberg and J. Marsden, 1982, Phys. Reports \textbf{89}, 179
  
\bibitem{CFH} A. Coley, A. Fuster and S. Hervik, 2009,
Int. J. Mod. Phys. A {\bf{24}}, 
1119 [arXiv:0707.0957].

\bibitem{string}  A. A. Coley, Phys. Rev. Letts, 2002,
{\bf 89}, 281601.  

\bibitem{McNutt} D. McNutt, A. Coley and  N. Pelavas, 2009,
Int. J.  Geom. Meth.  Mod. Phys. {\bf 6}, 419 
\& {\em ibid.}, 2010, {\bf 7}, 1349.
                 

\bibitem{NP}  E. T. Newman and R. Penrose, 1962, J.
Math. Phys. \textbf{3}, 566 (1962);  R. Penrose, 1963,
Phys. Rev. Lett. \textbf{10}, 66.




\end{thebibliography}
\end{document}